\theoremstyle{plain}
\newtheorem{theorem}{Theorem}
\newtheorem{proposition}[theorem]{Proposition}
\newtheorem{corollary}[theorem]{Corollary}
\theoremstyle{definition}
\newtheorem{remark}[theorem]{Remark}
\numberwithin{exercise}{section}
\numberwithin{equation}{section}
\numberwithin{theorem}{section}
\numberwithin{problem}{section}
\numberwithin{figure}{section}
\newcommand{\bs}[1]{{\boldsymbol{#1}}}
\newcommand{\D}{\,\mathrm{d}}
\begin{document}

\title{How trait distributions evolve in heterogeneous populations}

\author{Georgy P. Karev$^{1,} $\footnote{e-mail: karev@ncbi.nlm.nih.gov}\,\,, Artem S. Novozhilov$^{2,}$\footnote{Corresponding author, e-mail: artem.novozhilov@ndus.edu}\\[3mm]
\textit{\normalsize $^\textrm{\emph{1}}$National Center for Biotechnology Information, National Library of Medicine,}\\[-1mm]
\textit{ National Institutes of Health, Bethesda, MD, USA}\\[1mm]
\textit{\normalsize $^\textrm{\emph{2}}$Department of Mathematics, North Dakota State University, Fargo, ND 58108, USA}}
\date{}

\maketitle

\begin{abstract}We consider the problem of determining the time evolution of a trait distribution in a mathematical model of non-uniform populations with parametric heterogeneity. This means that we consider only heterogeneous populations in which heterogeneity is described by an individual specific parameter that differs in general from individual to individual, but does not change with time for the whole lifespan of this individual. Such a restriction allows obtaining a number of simple and yet important analytical results. In particular we show that initial assumptions on time-dependent behavior of various characteristics, such as the mean, variance, of coefficient of variation, restrict severely possible choices for the exact form of the trait distribution. We illustrate our findings by in-depth analysis of the variance evolution. We also reanalyze a well known mathematical model for gypsy moth population showing that the knowledge of how distributions evolve allows producing oscillatory behaviors for highly heterogeneous populations.

\paragraph{\small Keywords:} Heterogeneous populations, parametric heterogeneity, evolution of distribution, normal distribution, gamma distribution


\end{abstract}

\section{Introduction}
The real world populations are inherently heterogeneous, with individuals differ by age, spatial location, contact network, social habits, genome composition, etc. When various ecological and evolutionary processes are expected to be modeled with the help of mathematical models, this heterogeneous structure of populations should be included into the models. In rather general settings this assumption leads to the so-called structured population models (e.g., \cite{metz2014dynamics,perthame2006transport}), whose mathematical analysis, however, is quite demanding. In certain cases significant simplifications are possible, in particular when one speaks of the so-called \textit{parameteric heterogeneity} when the structure of a population is described by the variation in population parameters (such as birth rate or susceptibility to a particular disease); the parameters are considered as invariant properties of individuals that do not change their values throughout the lifespan of an individual, whereas they vary from individual to individual. As an example of an arguably simplest mathematical model with parametric heterogeneity consider the following Malthusian model
\begin{equation}\label{eq1:1}
    \frac{\partial n}{\partial t}(t,\omega)=\omega n(t,\omega),
\end{equation}
where $n(t,\omega)$ is the density of the population having specific parameter value $\omega$ at time $t$, and $\omega\in\Omega$ is the distributed parameter, $\Omega$ is a set of admissible values of $\omega$. To solve problem \eqref{eq1:1} one needs an initial condition of the form
\begin{equation}\label{eq1:2}
    n(0,\omega)=n_0(\omega)=N_0p(0,\omega),
\end{equation}
where $N_0$ is the initial size of the population and $p(0,\omega)$ is the given initial probability distribution function (pdf) of the parameter distribution.

Problem \eqref{eq1:1}, \eqref{eq1:2} provides a simplest example of a mathematical model with parametric heterogeneity, and at the same time is a very flexible object (see, e.g., \cite{Karev2000a,Karev2003,Karev2005}). It can be shown (see the cited references) that the total population size $N(t)=\int_\Omega n(t,\omega)\D \omega$ is explicitly given by
\begin{equation}\label{eq1:3}
    N(t)=N_0 \textsf{M}(0,t),
\end{equation}
where $\textsf{M}(0,\lambda)$ is the initial moment generating function (mgf) of the pdf $p(0,\omega)$:
\begin{equation}\label{eq1:4}
    \textsf{M}(0,\lambda)=\int_{\Omega}e^{\omega \lambda}p(0,\omega)\D\omega.
\end{equation}

It turns out that in much more general situations the initial mgf $\textsf{M}$ is the key device allowing to untangle the system's dynamics. To be precise, in what follows we consider the population models with parametric heterogeneity in the following form (we note that more general situations can be similarly treated, see, e.g., \cite{karev2009})
\begin{equation}\label{eq1:5}
    \frac{\partial n}{\partial t}(t,\omega)=n(t,\omega)\bigl(f(\bs v)+\omega g(\bs v)\bigr),
\end{equation}
where $f,g$ are two given functions that depend on the vector $\bs v=(t,N(t),\textsf{E}_t[\omega],\ldots)$, e.g., they may depend on the aggregated characteristics of $n(t,\omega)$ such as the total population size $N(t)$ at time $t$, the mean of the parameter $\textsf{E}_t[\omega]=\int_\Omega \omega p(t,\omega)\D \omega$, on other moments, but not on the density $n(t,\omega)$ or parameter $\omega$ themselves. The initial condition for model \eqref{eq1:5} is  given by \eqref{eq1:2}. We remark that model \eqref{eq1:5} is readily generalized to vector-parameters $\bs \omega$ and systems of equations. Despite its quite special form, model \eqref{eq1:5} was used for modeling various biological processes (e.g., \cite{Ackleh1999,boylan1991note,coutinho1999modelling,Dwyer1997,Karev2000,Karev2005,Karev2003,karev2010,karev2006,novozhilov2004,Novozhilov2012,Novozhilov2008,Veliov2005}). It can be shown (see, e.g., \cite{karev2009}) that model \eqref{eq1:5}, \eqref{eq1:2} is equivalent to a system of ordinary differential equations (ODE) of the form
\begin{equation}\label{eq1:6}
    \begin{split}
      \frac{\D N}{\D t}(t) &= N(t)\bigl(f(\bs v)+\textsf{E}_t[\omega]g(\bs v)\bigr), \\
       \textsf{E}_t[\omega] & =\left.\frac{\D}{\D \lambda}\textsf{M}(0,\lambda)\right|_{\lambda=q(t)}\cdot\frac{1}{\textsf{M}(0,q(t))}\,,\\
       \frac{\D q}{\D t}(t)&=g(\bs v),\\
       q(0)&=0,\quad N(0)=N_0.
    \end{split}
\end{equation}

The key fact for the present note is that the mgf $\textsf{M}(t,\lambda)$ of the parameter $\omega$ at any time moment $t$ can be expressed through the initial mgf $\textsf{M}(0,\lambda)$ and an auxiliary variable $q(t)$:
\begin{equation}\label{eq1:7}
    \textsf{M}(t,\lambda)=\frac{\textsf{M}(0,\lambda+q(t))}{\textsf{M}(0,q(t))}\,,
\end{equation}
which in principle allows computing any moments of the parameter distribution at any time.

We remark that a similar relation can be obtained within the frailty model (see, e.g., \cite{aalen1994effects,aalen2008survival}) in statistical analysis.

An immediate consequence of \eqref{eq1:7} is that model \eqref{eq1:5} severely restricts possible time dependent evolution of the parameter distribution. In particular, we have, as an immediate consequence of relation \eqref{eq1:7} the following
\begin{proposition}\label{pr:1:1}
$(i)$ Let the initial distribution $p(0,\omega)$ in the problem \eqref{eq1:5}, \eqref{eq1:2} be a gamma-distribution with parameters $k,\nu$:
$$
p(0,\omega)=\frac{\nu^k}{\Gamma(k)}w^{k-1}e^{-\nu \omega},\quad \omega\geq 0,\,k>0,\,\nu>0.
$$
Then at any time moment $t$ (for which the model is defined) the parameter distribution is again a gamma-distribution, with parameters $k$ and $\nu-q(t)$. In particular, the coefficient of variation $\textsf{CV}$ is constant:
$$
\textsf{\emph{CV}}=\frac{\sqrt{\textsf{\emph{Var}}_t[\omega]}}{\textsf{\emph{E}}_t[\omega]}=\frac{1}{\sqrt{k}}\,.
$$

$(ii)$ Let the initial distribution $p(0,\omega)$ in the problem \eqref{eq1:5}, \eqref{eq1:2} be a normal distribution with parameters $\mu,\sigma$:
$$
p(0,\omega)=(\sqrt{2\pi}\sigma)^{-1}\exp\left\{-\frac{(\omega-\mu)^2}{2\sigma^2}\right\},\quad \sigma>0.
$$

Then at any time moment $t$ the parameter distribution is again a normal distribution with parameters $\mu+2 q(t)\sigma^2,\sigma$. In particular, the variance of the distribution is constant.
\end{proposition}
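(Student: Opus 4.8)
The plan is to substitute the explicit moment generating functions of the gamma and normal laws into the master relation \eqref{eq1:7}, simplify, and read off the result, using the uniqueness theorem for moment generating functions to pass back from generating functions to densities.

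For $(i)$, I would first record that the mgf of the gamma distribution with shape $k$ and rate $\nu$ is $\textsf{M}(0,\lambda)=(1-\lambda/\nu)^{-k}=\nu^{k}(\nu-\lambda)^{-k}$, finite for $\lambda<\nu$. Replacing $\lambda$ by $\lambda+q(t)$ in this expression and dividing by $\textsf{M}(0,q(t))$ as prescribed by \eqref{eq1:7}, the $\nu^{k}$ factors cancel and one obtains
\[
\textsf{M}(t,\lambda)=\frac{(\nu-q(t)-\lambda)^{-k}}{(\nu-q(t))^{-k}}=\left(1-\frac{\lambda}{\nu-q(t)}\right)^{-k},
\]
which is exactly the mgf of a gamma law with shape $k$ and rate $\nu-q(t)$, finite on a neighbourhood of $\lambda=0$ as long as $\nu-q(t)>0$ — and this positivity is precisely the requirement that ``the model is defined'' at time $t$. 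Uniqueness of the mgf then identifies $p(t,\cdot)$ as $\mathrm{gamma}(k,\nu-q(t))$. Since $\mathrm{gamma}(k,\beta)$ has mean $k/\beta$ and variance $k/\beta^{2}$, its coefficient of variation equals $\sqrt{k/\beta^{2}}\big/(k/\beta)=1/\sqrt k$ irrespective of the rate $\beta$, so $\textsf{CV}$ stays $1/\sqrt k$ for all admissible $t$.

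For $(ii)$, I would use that the mgf of $N(\mu,\sigma^{2})$ is $\textsf{M}(0,\lambda)=\exp\{\mu\lambda+\tfrac12\sigma^{2}\lambda^{2}\}$, finite for every real $\lambda$. Substituting into \eqref{eq1:7}, expanding $(\lambda+q(t))^{2}$, and cancelling the $\textsf{M}(0,q(t))$ in the denominator leaves an exponent that is still a quadratic polynomial in $\lambda$ whose quadratic coefficient is the unchanged $\tfrac12\sigma^{2}$ and whose linear coefficient is $\mu$ shifted by a multiple of $q(t)\sigma^{2}$. Hence $\textsf{M}(t,\cdot)$ is again a Gaussian mgf, uniqueness of the mgf identifies $p(t,\cdot)$ as a normal law with standard deviation $\sigma$, and in particular the variance is the constant $\sigma^{2}$.

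The computations themselves are routine; the only points that genuinely require care are (a) keeping the shifted argument $\lambda+q(t)$ inside the domain on which $\textsf{M}(0,\cdot)$ is finite — automatic in the normal case and equivalent to $\nu-q(t)>0$ in the gamma case, which is exactly the standing hypothesis that the model remains well-defined — and (b) invoking the uniqueness theorem for moment generating functions (applicable because each resulting mgf is finite on an open interval containing the origin) in order to conclude from the equality of generating functions in \eqref{eq1:7} the corresponding equality of the underlying distributions. With these in place, the ``constant $\textsf{CV}$'' and ``constant variance'' assertions are immediate from the standard moment formulas.
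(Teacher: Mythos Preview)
Your approach is correct and is exactly the one the paper has in mind: the proposition is stated there simply as ``an immediate consequence of relation \eqref{eq1:7}'', with no further argument, and your computation is precisely that immediate consequence written out in full. Your extra care with the domain of $\textsf{M}(0,\cdot)$ and the appeal to uniqueness of the mgf only make the reasoning more complete than the paper's one-line remark. (As an aside, carrying out the normal-case computation explicitly gives the shifted mean $\mu+\sigma^{2}q(t)$ rather than the $\mu+2\sigma^{2}q(t)$ printed in the statement; your deliberately non-committal ``a multiple of $q(t)\sigma^{2}$'' sidesteps this, and in any event the claimed constancy of the variance is unaffected.)
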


Proposition \ref{pr:1:1} can easily be extended to various other initial distributions. We would like to mention only few additional examples that are important from a practical point of view and which we use in the current text to highlight and illustrate general results.
\begin{proposition}\label{pr:1:2}
$(i)$ Let the initial distribution in the problem \eqref{eq1:5}, \eqref{eq1:2} be Poissonian with the mean $\mu$:
$$
\textsf{\emph{Pr}}_0(\omega=i)=\frac{\mu^i}{i!}e^{-\mu},\quad i=0,1,\ldots.
$$
Then at any time moment $t$ the parameter distribution is again Poissonian with the mean $\textsf{\emph{E}}_t[\omega]=\mu e^{q(t)}$.

$(ii)$ Let the initial distribution be truncated exponential, which means that
$$
p(0,\omega)=C e^{-s \omega},
$$
where $0<\omega<b$ and $C=s(1-e^{-sb})^{-1}$ is the normalization constant. Then at any time moment $t$ the distribution is also truncated exponential in the same interval $[0,b]$ with the parameter $s-q(t)$.
\end{proposition}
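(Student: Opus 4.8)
The plan is to apply relation \eqref{eq1:7} directly in each case, which reduces the whole proposition to three routine steps: (a) compute the moment generating function $\textsf{M}(0,\lambda)$ of the stated initial distribution; (b) form the ratio $\textsf{M}(0,\lambda+q(t))/\textsf{M}(0,q(t))$; and (c) recognize the result as the mgf of a distribution in the same family. Since a distribution on a fixed support is uniquely determined by its mgf on any neighborhood of the origin, step (c) finishes the argument. This is exactly the mechanism already used for Proposition~\ref{pr:1:1}, so the proof is entirely parallel.

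For part $(i)$ I would first compute the Poisson mgf
$$
\textsf{M}(0,\lambda)=\sum_{i=0}^\infty e^{i\lambda}\frac{\mu^i}{i!}e^{-\mu}=\exp\bigl\{\mu(e^\lambda-1)\bigr\},
$$
which is entire, so no convergence issue can arise at the arguments $\lambda+q(t)$, $q(t)$. Substituting into \eqref{eq1:7} gives
$$
\textsf{M}(t,\lambda)=\exp\bigl\{\mu(e^{\lambda+q(t)}-1)-\mu(e^{q(t)}-1)\bigr\}=\exp\bigl\{\mu e^{q(t)}(e^\lambda-1)\bigr\},
$$
which is precisely the mgf of a Poisson law with mean $\mu e^{q(t)}$; uniqueness of the mgf gives the claim, and the support $\{0,1,2,\ldots\}$ is obviously unchanged.

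For part $(ii)$ the mgf of the truncated exponential is
$$
\textsf{M}(0,\lambda)=C\int_0^b e^{(\lambda-s)\omega}\D\omega=\frac{s}{1-e^{-sb}}\cdot\frac{e^{(\lambda-s)b}-1}{\lambda-s},
$$
entire in $\lambda$ (the apparent pole at $\lambda=s$ is removable since the integral is over a bounded interval). Writing $\tilde s=s-q(t)$ and inserting this into \eqref{eq1:7}, the prefactor $s/(1-e^{-sb})$ cancels between numerator and denominator, and after the substitution $\lambda+q(t)-s=\lambda-\tilde s$, $q(t)-s=-\tilde s$ the remaining algebra collapses to
$$
\textsf{M}(t,\lambda)=\frac{\tilde s}{1-e^{-\tilde s b}}\cdot\frac{e^{(\lambda-\tilde s)b}-1}{\lambda-\tilde s},
$$
which is the mgf of a truncated exponential on the same interval $[0,b]$ with parameter $\tilde s=s-q(t)$ and updated normalization constant $\tilde s(1-e^{-\tilde s b})^{-1}$ (note $\tilde s$ may change sign, which is harmless — the density merely becomes increasing). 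Again uniqueness of the mgf concludes the proof.

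I do not expect a real obstacle here: the only points needing a word of care are the bookkeeping of the removable singularity at $\lambda=\tilde s$ in part $(ii)$ and the observation that $\textsf{M}(0,\cdot)$ is finite at the relevant arguments — automatic in both cases — and one should cite the standard uniqueness theorem for moment generating functions to pass from equality of mgf's back to equality of distributions.
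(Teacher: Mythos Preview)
Your proposal is correct and follows exactly the route the paper intends: the paper does not spell out a proof of Proposition~\ref{pr:1:2} but presents it as an immediate consequence of the key relation~\eqref{eq1:7}, and your computation of $\textsf{M}(0,\lambda)$ followed by the ratio $\textsf{M}(0,\lambda+q(t))/\textsf{M}(0,q(t))$ is precisely that. The algebra in both parts checks out, and your remarks on the removable singularity and on the uniqueness of the mgf are appropriate finishing touches.
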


In both Propositions \ref{pr:1:1} and \ref{pr:1:2} the form of the distribution does not change with time, only its parameters evolve. This is not true for an arbitrary distribution. For instance, an immediate consequence of \eqref{eq1:7} is
\begin{proposition}\label{pr:1:3}
Let the initial distribution in the problem \eqref{eq1:5}, \eqref{eq1:2} be uniform in the interval $[a,b]$. Then at any time moment $t$ the parameter distribution is truncated exponential in the interval $[a,b]$ and has the form
$$
p(t,\omega)=q(t)\frac{\exp(q(t)\omega)}{\exp(q(t)b)-\exp(q(t)a)}\,.
$$
\end{proposition}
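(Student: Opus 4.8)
The plan is to apply the key identity \eqref{eq1:7} directly. First I would compute the initial moment generating function of the uniform distribution on $[a,b]$: since $p(0,\omega) = 1/(b-a)$ for $\omega \in [a,b]$, we have
$$
\textsf{M}(0,\lambda) = \frac{1}{b-a}\int_a^b e^{\omega\lambda}\D\omega = \frac{e^{\lambda b} - e^{\lambda a}}{\lambda(b-a)}\,,
$$
with the understanding that $\textsf{M}(0,0) = 1$ by continuity.

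Next I would substitute into \eqref{eq1:7} to obtain $\textsf{M}(t,\lambda) = \textsf{M}(0,\lambda+q(t))/\textsf{M}(0,q(t))$; the factors of $b-a$ cancel, and after simplification one gets
$$
\textsf{M}(t,\lambda) = \frac{q(t)}{\lambda+q(t)}\cdot\frac{e^{(\lambda+q(t))b} - e^{(\lambda+q(t))a}}{e^{q(t)b} - e^{q(t)a}}\,.
$$
Then I would verify that this is precisely the moment generating function of the claimed density $p(t,\omega) = q(t)\exp(q(t)\omega)/(\exp(q(t)b) - \exp(q(t)a))$ on $[a,b]$, by computing $\int_a^b e^{\omega\lambda} p(t,\omega)\D\omega$ directly — the integral of $e^{\omega(\lambda+q(t))}$ over $[a,b]$ gives exactly the numerator, divided by $\lambda+q(t)$, matching the expression above. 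Since the mgf determines the distribution uniquely (both are supported on the compact interval $[a,b]$), this identifies $p(t,\omega)$ as stated. Alternatively, and perhaps more transparently, one can note that the uniform distribution is the $s\to 0$ limit of the truncated exponential $Ce^{-s\omega}$ on $[0,b]$ shifted to $[a,b]$, and invoke Proposition \ref{pr:1:2}(ii): the parameter $s$ evolves to $s - q(t)$, so $s=0$ evolves to $-q(t)$, giving density proportional to $e^{q(t)\omega}$; normalizing over $[a,b]$ yields the formula.

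The only real subtlety — and it is minor — is handling the degenerate case $q(t) = 0$: there the formula is a $0/0$ indeterminate form, but its limit as $q(t)\to 0$ is $1/(b-a)$, recovering the uniform distribution, which is consistent since $q(0)=0$ and $p(0,\omega)$ is uniform by hypothesis. I would simply remark that the displayed formula is understood in this limiting sense at $q(t)=0$. Everything else is a routine integral computation, so there is no genuine obstacle here; the proposition is an immediate corollary of \eqref{eq1:7} once the uniform mgf is written down.
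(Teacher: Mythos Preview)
Your proposal is correct and follows exactly the approach the paper indicates: the paper offers no explicit proof but states the proposition as ``an immediate consequence of \eqref{eq1:7}'', and you have simply written out that consequence in full by computing the uniform mgf, applying \eqref{eq1:7}, and identifying the result. The alternative route via Proposition~\ref{pr:1:2}(ii) and the handling of the $q(t)=0$ case are nice additions but not needed.
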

\section{Inverse problem}

Now, since we know that the form of the mathematical model \eqref{eq1:5} puts some restrictions on the evolution of the trait distributions, and in some cases these restrictions imply conservations of particular quantities (variance in the case of the normal distribution, coefficient of variation in the case of the gamma distribution, etc), it is natural and important to ask the inverse question: What if we know that, for instance, the coefficient of variation is conserved with time. Does it imply that the initial distribution must be a gamma distribution?

To answer this question consider again the key relation \eqref{eq1:7}. From this relation we obtain that
\begin{equation}\label{eq2:0}
\begin{split}
\textsf{E}_t[\omega]&=\left.\frac{\D }{\D \lambda}\textsf{M}(t,\lambda)\right|_{\lambda=0}=\frac{\frac{\D}{\D \lambda} \textsf{M}(0,\lambda)|_{\lambda=q(t)}}{\textsf M(0,q(t))}\,,\\
\textsf{Var}_t[\omega]&=\left.\frac{\D^2 }{\D \lambda^2}\textsf{M}(t,\lambda)\right|_{\lambda=0}-\left(\textsf{E}_t[\omega]\right)^2=\frac{\frac{\D^2}{\D \lambda^2} \textsf{M}(0,\lambda)|_{\lambda=q(t)}}{\textsf M(0,q(t))}-\left(\frac{\frac{\D}{\D \lambda} \textsf{M}(0,\lambda)|_{\lambda=q(t)}}{\textsf M(0,q(t))}\right)^2\,,
\end{split}
\end{equation}
and so on.

To simplify the notation, let us introduce the function $m(q)=\textsf{M}(0,q)$ that depends on the auxiliary variable $q$, which we consider to be an \textit{internal time} of our systems (the internal time for the models in the form \eqref{eq1:5} was introduced in \cite{karev2016mathematical}, where an extended discussion of this concept is given).

First we consider the dynamics of the mean $ \textsf{E}_q[\omega]=\mu(q)$, which, according to the first equation in \eqref{eq2:0}, turns into the first order ODE
\begin{equation}\label{eq:2:a}
\frac{m'(q)}{m(q)}=\mu(q)
\end{equation}
with the initial condition $m(0)=1$. In particular, we immediately obtain that if the dynamics of the mean is given by a linear function of $q$: $\mu(q)=2\sigma^2 q+\mu$ then the initial distribution must be normal, because in this case we have $m(q)=e^{2\sigma^2 q+\mu}$, which is the mgf of the normal distribution. Similarly, assuming that $\mu(q)=\mu e^{q}$ implies the initial Poissonian distribution with the mgf $\textsf{M}_0(q)=e^{\mu(\exp(q)-1)}$. Furthermore, assuming that $\mu(q)=\eta+k(\nu-q)^{-1}$ implies $m(q)=\exp(\eta q)\left(1-\frac{q}{\nu}\right)^{-k}$ and hence initial gamma distribution with parameters $\eta,\nu,k$; for the special case $\eta=0,k=1$ we get the initial exponential distribution with mean $1/\nu$. The list of possible distributions can be readily extended.

As a final remark about the ODE \eqref{eq:2:a}, we note that $\mu(q)=\mbox{const}=E$ implies $m(q)=e^{E q}$, which is formally the mgf of the delta-function concentrated at the point $\omega=E$. This implies that within the framework of model \eqref{eq1:5} assumption on the constant mean of the parameter distribution is equivalent to the assumption that the population is homogeneous.

Now we turn to the second equation in \eqref{eq2:0}. Te expression for the variance $\sigma^2(q)=\textsf{Var}_q[\omega]$ can be written as
\begin{equation}\label{eq2:1}
\frac{m''(q)}{m(q)}-\left(\frac{m'(q)}{m(q)}\right)^2=\sigma^2(q),
\end{equation}
where the derivatives are taken with respect to $q$. Equation \eqref{eq2:1} is an ODE with the natural initial conditions
\begin{equation}\label{eq2:2}
    m(0)=1,\,m'(0)=\textsf{E}_t[\omega]=\mu,
\end{equation}
which can be readily solved as
$$
m(q)=\exp\left\{\int_0^q\int_0^s\sigma^2(\tau)\D \tau\D s+\mu q\right\}\,.
$$
The last expression gives us a family of mgf depending on the behavior of the variance of the distribution of the parameter at the internal time moment $q$. In particular, if one assumes that $\sigma^2(q)=\sigma^2$ is a constant then
$$
m(q)=\exp\left\{\sigma^2\frac{q^2}{2}+\mu q\right\},
$$
which is exactly the mgf of the normal distribution.

Similarly, one can assume that in equation \eqref{eq2:1} the coefficient of variation is conserved, and hence the ODE takes the form
\begin{equation}\label{eq2:3}
\frac{m''(q)}{m(q)}-\left(\frac{m'(q)}{m(q)}\right)^2=\left(\frac{m'(q)}{m(q)}\right)^2 \textsf{CV}^2,
\end{equation}
or, in terms of the parameters $k,\nu$
$$
\frac{m''(q)}{m(q)}-\left(\frac{m'(q)}{m(q)}\right)^2=\left(\frac{m'(q)}{m(q)}\right)^2\frac{1}{k}\,\quad m(0)=1,\,m'(0)=\frac{k}{\nu}\,.
$$
The solution to the last initial value problem is
$$
m(q)=\left(1-\frac{q}{\nu}\right)^{-k}\,,
$$
which is the mgf of the gamma distribution with parameters $k,\nu$. Therefore we have proved
\begin{proposition}\label{pr:2:1}
Let \eqref{eq1:5} be the mathematical model of the dynamics of a heterogeneous population and assume that the variance of the parameter distribution does not change with time. Then the initial (and at any time moment $t$) distribution of the parameter must be normal.

Let \eqref{eq1:5} be the mathematical model of the dynamics of a heterogeneous population and assume that the coefficient of variation of the parameter distribution does not change with time. Then the initial (and at any time moment $t$ for which the model is defined) distribution of the parameter must be a gamma distribution.

\end{proposition}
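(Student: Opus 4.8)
The plan is to use the key relation \eqref{eq1:7} to turn each hypothesis into an ordinary differential equation for the internal-time moment generating function $m(q)=\textsf{M}(0,q)$, to solve that equation, and then to identify the resulting function as the mgf of a normal or a gamma law.

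First I would record the translation. Writing $\ell(q)=\log m(q)$ and differentiating \eqref{eq1:7} in $\lambda$ at $\lambda=0$ as in \eqref{eq2:0}, one gets $\textsf{E}_q[\omega]=m'(q)/m(q)=\ell'(q)$ and $\textsf{Var}_q[\omega]=m''(q)/m(q)-(m'(q)/m(q))^2=\ell''(q)$, with initial data $\ell(0)=0$, $\ell'(0)=\mu:=\textsf{E}_0[\omega]$. Hence the assumption that the variance is constant in time becomes $\ell''\equiv\sigma^2$, and the assumption that the coefficient of variation is constant becomes $\ell''=\textsf{CV}^2(\ell')^2$. The second step is to solve these. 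Integrating $\ell''\equiv\sigma^2$ twice yields $\ell(q)=\tfrac12\sigma^2q^2+\mu q$, so $m(q)=\exp\{\tfrac12\sigma^2q^2+\mu q\}$, the mgf of the normal distribution with mean $\mu$ and variance $\sigma^2$. For the second equation, put $u=\ell'$ and $k=\textsf{CV}^{-2}$; then $u'=u^2/k$, which is separable and has solution $u(q)=k/(\nu-q)$ through $u(0)=\mu$, with $\nu:=k/\mu$; integrating once more with $\ell(0)=0$ gives $\ell(q)=-k\log(1-q/\nu)$, i.e.\ $m(q)=(1-q/\nu)^{-k}$, the mgf of the gamma distribution with shape $k$ and rate $\nu$. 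Substituting either $m$ back into \eqref{eq1:7} recovers the parameter shift of Proposition \ref{pr:1:1}, which yields the parenthetical "and at any time moment $t$".

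The one step that needs care — and the only real obstacle — is the final identification: from the equality of moment generating functions one must conclude equality of distributions, which is the classical uniqueness theorem for mgf's. It applies because in each case $m$ is finite on an open interval containing the origin ($\R$ for the Gaussian, $(-\infty,\nu)$ for the gamma), a regularity hypothesis that is already implicit in writing \eqref{eq1:6}. Two minor remarks round out the argument. First, "does not change with time" must be read along the internal-time trajectory $q=q(t)$; provided $g(\bs v)\not\equiv0$ near $t=0$ the map $t\mapsto q(t)$ is locally invertible, so the hypothesis forces $\sigma^2(q)$, respectively $\textsf{CV}(q)$, to be locally constant in $q$, and since an mgf is real-analytic wherever it is finite the ODE identities then determine $m$ on its whole domain. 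Second, the constant-$\textsf{CV}$ case genuinely forces the unshifted gamma, since a shifted gamma with $\ell(q)=\eta q-k\log(1-q/\nu)$ has $q$-dependent coefficient of variation unless $\eta=0$, and the ODE for $u$ leaves no room for such a shift. Beyond these points the proof consists only of the routine integrations above.
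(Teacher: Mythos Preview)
Your proposal is correct and follows essentially the same route as the paper: both translate the hypotheses via \eqref{eq1:7} and \eqref{eq2:0} into the ODEs \eqref{eq2:1} and \eqref{eq2:3} for $m(q)$, solve them, and recognize the resulting mgfs. Your use of the cumulant generating function $\ell=\log m$ is a cosmetic simplification (making $\textsf{Var}_q[\omega]=\ell''$ immediate), and your closing remarks on mgf uniqueness, analyticity along the internal-time trajectory, and the exclusion of the shifted gamma are welcome points of rigor that the paper leaves implicit.
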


We emphasize that taken together Propositions \ref{pr:1:1} and \ref{pr:2:1} show that within the framework of the heterogeneous populations with parametric heterogeneity the assumption on the initial distribution to be a normal one (a gamma distribution) is \textit{equivalent} to assuming that the variance (respectively, the coefficient of variation) of the given trait distribution that evolves in time is \textit{constant}.

\begin{remark}It is interesting (and intriguing) to remark that prescription of the initial variance $\sigma^2$ implies the normal distribution in the settings of the maximum entropy models \cite{kapur1989maximum}. To obtain a gamma-distribution in the same framework requires fixing both the arithmetic and geometric means for the distribution.
\end{remark}

\begin{remark}The list of ODE that produce mgf for some specific conditions can be easily extended. We included a few more examples in Appendix.
\end{remark}
\section{Dynamics of the variance}
In this section we start with a more general mathematical model of the form
\begin{equation}\label{eq3:1}
    \frac{\partial n}{\partial t}(t,\omega)=n(t,\omega)F(\omega,\bs v),
\end{equation}
where $F$ is a given function, and other notation as in \eqref{eq1:5}. The variance of the parameter distribution at any time moment is equal, by definition, to
$$
\sigma^2(t)=\textsf{Var}_t[\omega]=\textsf{E}_t\left[(\omega-\textsf{E}_t[\omega])^2\right].
$$
The differential equation for the evolution of the variance directly follows from the Price equation (e.g., \cite{page2002unifying})
$$
\frac{\D}{\D t}\textsf{E}[z_t]=\textsf{Cov}[z_t,F]+\textsf{E}\left[\frac{\D z_t}{\D t}\right],
$$
where $z_t$ is a given trait value at the time $t$. Taking $z_t=\sigma^2(t)$ we obtain
\begin{proposition}\label{pr:3:1}
Within the framework of mathematical model \eqref{eq3:1}
$$
\frac{\D \sigma^2(t)}{\D t}=\textsf{\emph{Cov}}[F,(\omega-\textsf{\emph{E}}_t[\omega])^2].
$$
\end{proposition}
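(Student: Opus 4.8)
The plan is to apply the Price equation quoted just above the statement, specialized to a cleverly chosen trait. Take the (time-dependent) trait
$$z_t = z_t(\omega) = \bigl(\omega - \textsf{E}_t[\omega]\bigr)^2,$$
so that by the definition of the variance $\textsf{E}[z_t] = \sigma^2(t)$. Plugging this choice into
$$\frac{\D}{\D t}\textsf{E}[z_t] = \textsf{Cov}[z_t,F] + \textsf{E}\!\left[\frac{\D z_t}{\D t}\right]$$
turns the left-hand side into $\D\sigma^2(t)/\D t$ and the first term on the right into $\textsf{Cov}\bigl[(\omega-\textsf{E}_t[\omega])^2,F\bigr]$, so the whole statement reduces to showing that the residual term $\textsf{E}[\D z_t/\D t]$ vanishes.

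To see this I would exploit the defining property of parametric heterogeneity: the parameter $\omega$ carried by an individual is time-invariant, hence in $z_t(\omega) = (\omega-\textsf{E}_t[\omega])^2$ the only quantity that depends on $t$ is the scalar $\textsf{E}_t[\omega]$. Therefore
$$\frac{\D z_t}{\D t} = -2\bigl(\omega - \textsf{E}_t[\omega]\bigr)\,\frac{\D}{\D t}\textsf{E}_t[\omega],$$
and, because the factor $\D\textsf{E}_t[\omega]/\D t$ does not depend on $\omega$, taking expectations gives
$$\textsf{E}\!\left[\frac{\D z_t}{\D t}\right] = -2\,\frac{\D}{\D t}\textsf{E}_t[\omega]\;\textsf{E}_t\bigl[\omega - \textsf{E}_t[\omega]\bigr] = 0,$$
since the first central moment of any distribution is zero. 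This completes the argument.

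For completeness I would also recall the short derivation of the Price equation in the present deterministic transport setting, so that its use above is self-contained: writing $\textsf{E}_t[z] = N(t)^{-1}\int_\Omega z(t,\omega)\,n(t,\omega)\D\omega$ with $N(t) = \int_\Omega n(t,\omega)\D\omega$, differentiating this quotient in $t$, and substituting $\partial_t n = nF$ from \eqref{eq3:1}, one obtains $\frac{\D}{\D t}\textsf{E}_t[z] = \textsf{E}_t[\partial_t z] + \textsf{E}_t[zF] - \textsf{E}_t[z]\textsf{E}_t[F]$, and the last two terms are precisely $\textsf{Cov}_t[z,F]$. I do not anticipate a genuine obstacle here; the only points that need a little care are the bookkeeping of $t$- versus $\omega$-dependence when differentiating $z_t$, and the standing regularity hypotheses — differentiability of $n(\cdot,\omega)$ and of $\textsf{E}_t[\omega]$ in $t$, finiteness of second moments, and the admissibility of differentiating under the integral sign — that make each manipulation legitimate.
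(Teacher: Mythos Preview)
Your proof is correct and follows essentially the same route as the paper: apply the Price equation with $z_t=(\omega-\textsf{E}_t[\omega])^2$ and observe that the residual term $\textsf{E}_t[\partial_t z_t]=-2\,\frac{\D}{\D t}\textsf{E}_t[\omega]\,\textsf{E}_t[\omega-\textsf{E}_t[\omega]]$ vanishes because the first central moment is zero. Your inclusion of a self-contained derivation of the Price equation from $\partial_t n=nF$ is a nice addition that the paper omits.
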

\begin{proof}Indeed,
\begin{align*}
\frac{\D \sigma^2(t)}{\D t}&=\frac{\D \textsf{E}_t[(\omega-\textsf{E}_t[\omega])^2]}{\D t}=\textsf{Cov}[F,(\omega-\textsf{{E}}_t[\omega])^2]- 2 \textsf{E}_t\left[(\omega-\textsf{E}_t[\omega])\frac{\D \textsf{E}_t[\omega]}{\D t}\right].
\end{align*}
Since
$$
\textsf{E}_t\left[(\omega-\textsf{E}_t[\omega])\frac{\D \textsf{E}_t[\omega]}{\D t}\right]=\frac{\D \textsf{E}_t[\omega]}{\D t} \textsf{E}_t[\omega-\textsf{E}_t[\omega]]=0,
$$
we obtain the desired result.
\end{proof}


More can be said in the case of model \eqref{eq1:5}. Indeed, in this case we have
\begin{align*}
\textsf{Cov}[f(\bs v)+\omega g(\bs v),(\omega-\textsf{E}_t[\omega])^2]&=g(\bs v)\textsf{Cov}[\omega ,(\omega-\textsf{E}_t[\omega])^2]\\
&=g(\bs v)\left(\textsf{E}_t\left[\omega(\omega-\textsf{E}_t[\omega])^2\right]-\textsf{E}_t[\omega]\textsf{E}_t\left[(\omega-\textsf{E}_t[\omega])^2\right]\right)\\
&=g(\bs v)\left(\textsf{E}_t\left[(\omega-\textsf{E}_t[\omega])(\omega-\textsf{E}_t[\omega])^2\right]-\textsf{E}_t[\omega]\textsf{E}_t\left[(\omega-\textsf{E}_t[\omega])^2\right]\right)\\
&=g(\bs v)\left(\textsf{E}_t\left[(\omega-\textsf{E}_t[\omega])^3\right]\right),
\end{align*}
and hence we have proved
\begin{proposition}\label{pr:3:2}
Within the framework of model \eqref{eq1:5}
$$
\frac{\D \sigma^2(t)}{\D t}=g(\bs v)\mu_3(t),
$$
where $\mu_3(t)=\textsf{E}_t\left[(\omega-\textsf{E}_t[\omega])^3\right]$ is the third central moment of the parameter distribution at time $t$.
\end{proposition}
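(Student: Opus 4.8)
The plan is to obtain this as a direct specialization of Proposition~\ref{pr:3:1}. Starting from
$$\frac{\D\sigma^2(t)}{\D t}=\textsf{Cov}\bigl[F,(\omega-\textsf{E}_t[\omega])^2\bigr]$$
with $F(\omega,\bs v)=f(\bs v)+\omega g(\bs v)$, I would first note that $f(\bs v)$ and $g(\bs v)$ depend only on aggregated quantities (such as $N(t)$ and $\textsf{E}_t[\omega]$) and not on $\omega$, so at each fixed $t$ they act as deterministic constants inside any $\omega$-expectation. By bilinearity of the covariance the $f(\bs v)$ term drops out and $g(\bs v)$ factors out:
$$\textsf{Cov}\bigl[f(\bs v)+\omega g(\bs v),(\omega-\textsf{E}_t[\omega])^2\bigr]=g(\bs v)\,\textsf{Cov}\bigl[\omega,(\omega-\textsf{E}_t[\omega])^2\bigr].$$

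Next I would expand the surviving covariance,
$$\textsf{Cov}\bigl[\omega,(\omega-\textsf{E}_t[\omega])^2\bigr]=\textsf{E}_t\bigl[\omega(\omega-\textsf{E}_t[\omega])^2\bigr]-\textsf{E}_t[\omega]\,\textsf{E}_t\bigl[(\omega-\textsf{E}_t[\omega])^2\bigr],$$
and perform the single algebraic trick that drives the proof: in the first term write the leading $\omega$ as $(\omega-\textsf{E}_t[\omega])+\textsf{E}_t[\omega]$, so that
$$\textsf{E}_t\bigl[\omega(\omega-\textsf{E}_t[\omega])^2\bigr]=\textsf{E}_t\bigl[(\omega-\textsf{E}_t[\omega])^3\bigr]+\textsf{E}_t[\omega]\,\textsf{E}_t\bigl[(\omega-\textsf{E}_t[\omega])^2\bigr].$$
Substituting back, the two copies of $\textsf{E}_t[\omega]\,\textsf{E}_t[(\omega-\textsf{E}_t[\omega])^2]$ cancel and one is left with $\textsf{E}_t[(\omega-\textsf{E}_t[\omega])^3]=\mu_3(t)$, giving $\tfrac{\D}{\D t}\sigma^2(t)=g(\bs v)\mu_3(t)$. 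In effect the proof is just to read the chain of four displayed equalities preceding the statement as this argument.

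There is no real obstacle here; the computation is a routine identity for central moments. The only points worth flagging are the standing assumption that moments of the parameter distribution exist up to order three (so the covariance and $\mu_3(t)$ are well defined and the differentiation under the expectation is legitimate), and the structural observation that $f,g$ are functions of $\bs v$ alone — this is exactly what makes the $f$-term vanish and lets $g$ come out front, and hence what makes the variance dynamics close in terms of the third central moment. One could instead derive the same formula by differentiating the moment relation \eqref{eq1:7} in $\lambda$ and in $t$, but that route is more cumbersome, so I would keep the Price-equation argument via Proposition~\ref{pr:3:1}.
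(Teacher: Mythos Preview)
Your proposal is correct and follows exactly the paper's own argument: specialize Proposition~\ref{pr:3:1} to $F=f(\bs v)+\omega g(\bs v)$, drop the $f$-term and factor out $g$ by bilinearity, then center $\omega$ to reduce $\textsf{Cov}[\omega,(\omega-\textsf{E}_t[\omega])^2]$ to $\mu_3(t)$. Your explicit decomposition $\omega=(\omega-\textsf{E}_t[\omega])+\textsf{E}_t[\omega]$ is in fact a slightly cleaner justification of the paper's passage from the second to the third displayed line.
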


\begin{corollary}
Let the parameter distribution $p(t,\omega)$ in the model \eqref{eq1:5} be symmetric at any time $t$. Then this distribution is normal.
\end{corollary}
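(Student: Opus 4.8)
The plan is to chain Propositions~\ref{pr:3:2} and~\ref{pr:2:1}. First I would observe that a probability distribution symmetric about its center (necessarily its mean, when the mean exists) has all odd central moments equal to zero; in particular the third central moment $\mu_3(t)=\textsf{E}_t[(\omega-\textsf{E}_t[\omega])^3]$ vanishes for every $t$. Substituting $\mu_3(t)\equiv 0$ into the identity of Proposition~\ref{pr:3:2} gives $\frac{\D\sigma^2(t)}{\D t}=g(\bs v)\mu_3(t)\equiv 0$, so the variance of the parameter distribution is constant in time. Proposition~\ref{pr:2:1} then identifies the distribution as normal, which finishes the proof.

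For completeness I would recall the mechanism behind the last step (the proof of Proposition~\ref{pr:2:1}): passing to the internal time $q$ and writing $\kappa(q)=\ln m(q)=\ln\textsf{M}(0,q)$ for the cumulant generating function, relation~\eqref{eq2:1} says that $\kappa''(q)$ equals the variance at internal time $q$, while $m(0)=1$, $m'(0)=\mu$ translate into $\kappa(0)=0$, $\kappa'(0)=\mu$. The constraint $\kappa''(q)\equiv\sigma^2$ integrates to $\kappa(q)=\tfrac12\sigma^2 q^2+\mu q$, i.e. $m(q)=\exp\{\tfrac12\sigma^2 q^2+\mu q\}$, which is the mgf of the normal law with mean $\mu$ and variance $\sigma^2$; by~\eqref{eq1:7} the distribution at time $t$ is normal with mean $\mu+\sigma^2 q(t)$ and the same variance.

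An alternative route bypasses Proposition~\ref{pr:3:2} altogether: by~\eqref{eq1:7} the third central moment at time $t$ is precisely the third cumulant, namely $\kappa'''(q(t))$, so the symmetry hypothesis is just $\kappa'''(q)\equiv 0$ on the range of $q$; integrating against the initial data $\kappa(0)=0$, $\kappa'(0)=\mu$, $\kappa''(0)=\sigma^2$ again produces the quadratic $\kappa$, hence the normal mgf and (by analyticity of moment generating functions) the normal distribution itself.

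The one point I would flag rather than fight is a degeneracy hidden in both routes: the final step needs the internal time $q(t)$ to sweep out a nondegenerate interval, which holds exactly when $g(\bs v)$ is not identically zero along the trajectory. If $g\equiv 0$ then $q(t)\equiv 0$, the parameter distribution never evolves, and ``symmetric at every $t$'' says no more than ``symmetric initially''; so the corollary is understood under the tacit assumption that the model genuinely mixes the heterogeneity (equivalently, $q$ is nonconstant), which is the situation of interest here.
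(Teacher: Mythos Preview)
Your proposal is correct and follows essentially the same approach as the paper: symmetry forces $\mu_3(t)\equiv 0$, Proposition~\ref{pr:3:2} then gives constant variance, and Proposition~\ref{pr:2:1} identifies the distribution as normal. Your alternative route via $\kappa'''(q)\equiv 0$ is precisely the Appendix argument the paper alludes to, and your caveat about the degenerate case $g\equiv 0$ (so $q$ is constant) is a valid observation that the paper leaves implicit.
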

Indeed, since we assumed that $\mu_3(t)=0$ then by Proposition \ref{pr:3:2} the variance does not change with time, and hence by Proposition \ref{pr:2:1} the only distribution that satisfies this condition is the normal one. See also Appendix for a different proof of the same result.

We emphasize that the dynamics of the variance within the framework of considered mathematical models depends critically on the initial distribution. Let us illustrate it further using the simplest heterogeneous model \eqref{eq1:1} for which the internal time $q(t)$ coincides with the system time $t$. We already showed that the initial normal distribution implies constant in time variance. In contrast, assuming that the initial distribution is exponential with the initial mean $\textsf{E}_0[\omega]=1/T$ for some constant $T$ the key relation \eqref{eq1:7} implies that the distribution will be exponential for any time moment $t<T$ with the mean $\textsf{E}_t[\omega]=\frac{1}{T-t}$ and the variance $\sigma^2(t)=\frac{1}{(T-t)^2}$. Now we see that as $t\to T$ the variance tends to infinity.

Finally, together with the model \eqref{eq1:1} one can consider a truncated exponential distribution (see Proposition \ref{pr:1:2}) on the interval $[0,b]$ (or any distribution concentrated on a compact interval). Then according to the equation that holds for the model \eqref{eq1:1}
$$
\frac{\D \textsf{E}_t[\omega]}{\D t}=\sigma^2(t),
$$
which is a version of Fisher's fundamental theorem of natural selection (e.g., \cite{page2002unifying}), the mean value $\textsf{E}_t[\omega]$ increases until the population stops being heterogeneous; it means that the distribution in the course of time will concentrate at the point $b$ and hence the variance will vanish.

To reiterate, the dynamics of the variance (and all other statistical characteristics of the population) depend in an essential way on the initial distribution: even arbitrarily small value of the variance at the initial time moment cannot guarantee that it will not increase indefinitely, and even arbitrarily large initial variance may eventually vanish, we illustrate this fact in Figure \ref{fig:2}.
\begin{figure}[!t]
\centering
\includegraphics[width=0.5\textwidth]{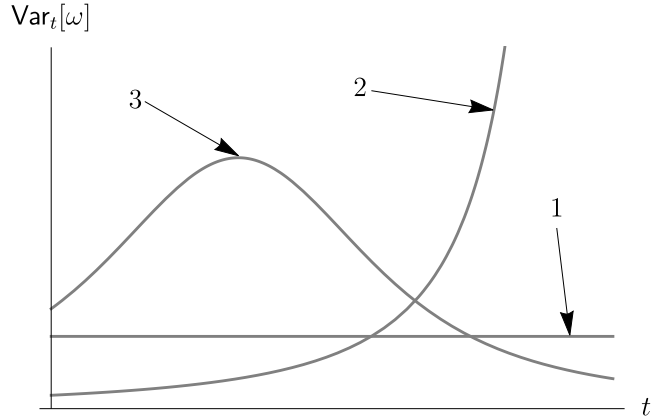}
\caption{Possible time dependent evolutions of the variance in the model \eqref{eq1:1}. $(1)$ is the initial normal distribution; $(2)$ is the initial exponential distribution, and $(3)$ the initial truncated on the interval $[0,5]$ exponential distribution.}\label{fig:2}
\end{figure}

The last observation becomes especially important in many models of \textit{quantitative genetics}, which often take the form \eqref{eq1:5}, and in many of which it is assumed that the variance is fixed (e.g., \cite{duffy2007rapid}). It should be kept in mind that, according to Proposition \ref{pr:2:1}, the only situation that is being modeled in this case is the initial normal distribution.

\section{Gypsy moth problem}\label{sec:4}
As an another example of the utility of the key theoretical relation \eqref{eq1:7} consider a general mathematical model that accounts for host heterogeneity among individual gypsy moth larva with respect to susceptibility to the virus.
This model was treated in a series of papers by Greg Dwyer and his co-authors \cite{Dwyer2002,Dwyer2000,Dwyer1997,elderd2008host}. These papers combined accurate mathematical modeling, laboratory dose-response experiments, field transmission experiments, and observations of naturally occurring populations. In the initial works it was concluded that ``a model incorporating host heterogeneity in susceptibility to the virus gives a much better fit to data on virus dynamics [...] than does a classical model''\cite{Dwyer1997} and ``our experimental estimates of virus transmission rates and levels of heterogeneity in susceptibility in gypsy moth populations give model dynamics that closely approximate the dynamics of real gypsy moth populations''\cite{Dwyer2000}. However, in a more recent work \cite{elderd2008host} the original model was replaced with an alternative one, because it was found that ``Our data show that heterogeneity in infection risk in this insect is so high that it leads to a stable equilibrium in the models, which is inconsistent with the outbreaks seen in North American gypsy moth populations''\cite{elderd2008host}. In more technical terms, it was observed that highly heterogeneous populations (for which the coefficient of variation $\textsf{CV}$ is bigger than one) exhibit oscillations with large amplitudes, whereas the mathematical model was able to produce such behaviors only for $\textsf{CV}<1$. In this section we show that the conclusion to refute the initial mathematical model was based on a small omission in the original analysis to ignore the key relation \eqref{eq1:7}.

\subsection{Mathematical model}
We mostly follow the notation from \cite{Dwyer2000} with several minor changes. Let $s(t,\omega)$ be the density of susceptible hosts having the susceptibility that is characterized by the parameter value $\omega$. Therefore, $s(0,\omega)$ defines the initial distribution of susceptibility before the disease starts, and $S(t)=\int_\Omega s(t,\omega)\D \omega$ is the total density of the host population at time $t$. Let $P(t)$ be the density of infectious cadavers at $t$, $\tau$ be the time between infection and death, and $\mu$ be the breakdown rate of the cadavers on the foliage. Then the mathematical model takes the form
\begin{equation}\label{eq:1}
\begin{split}
  \frac{\partial s}{\partial t}(t,\omega) & =-\omega s(t,\omega)P(t), \\\
  \frac{\D P}{\D t}(t)  & = P(t-\tau)\int_\Omega \omega s(t-\tau,\omega)\D \omega-\mu P(t),
\end{split}
\end{equation}
where $\Omega$ is the set of admissible values of $\omega$, e.g., $\Omega=[0,\infty)$. The initial conditions are
\begin{equation}\label{eq:2}
\begin{split}
     s(0,\omega)&=s_0(\omega)=S_0p_0(\omega)=S_0p(0,\omega),\\
     P(0)&=P_0,
\end{split}
\end{equation}
where $S_0=S(0)$ is the total initial density of the host population, and $p_0(\omega)$ is the \textit{initial} distribution of the parameter $\omega$ in the host population (such that $\int_\Omega p_0(\omega)d\omega=1$ and $p_0(\omega)\geq 0$ when $\omega\in\Omega$). Note that model \eqref{eq:1}, \eqref{eq:2} is a straightforward generalization of model \eqref{eq1:5}, \eqref{eq1:2}.

Inasmuch as of the most interest is the final epidemic size $x$ (i.e., the proportion of the population that gets infected during the epidemics), it is possible to allow the time to go to infinity to obtain a transcendental equation for $x$ in terms of the initial conditions and model parameters. Instead of this approach, that was used by Dwyer et al, we will show some intermediate steps.

First, integration of the first equation in \eqref{eq:1} with respect to $\omega$ implies
\begin{equation}\label{eq:2a}
\begin{split}
  \frac{\D S}{\D t}(t) & =-\textsf{E}_t[\omega] S(t)P(t), \\
  \frac{\D P}{\D t}(t)  & =\textsf{E}_{t-\tau}[\omega]S(t-\tau)P(t-\tau)-\mu P(t),
\end{split}
\end{equation}
where, as before,
$$
\textsf{E}_t[\omega]=\int_\Omega \omega p(t,\omega)\D\omega,
$$
is the current \emph{mean} of the distribution of $\omega$ in the host population at the moment $t$. It is not constant, but a function of time (intuitively, it must decrease, since the infection washes out first those who have initially higher values of $\omega$). As we discussed in Section 1, we have
\begin{equation}\label{eq:4}
\textsf{E}_t[\omega]=\frac{\D}{\D\lambda} \log \textsf{M}(0,\lambda)|_{\lambda=q(t)}\,,
\end{equation}
where $q(t)$ solves
\begin{equation}\label{eq:5}
\frac{\D q}{\D t}(t)=-P(t),
\end{equation}
with the initial condition $q(0)=0$. Therefore, instead of two equations in \eqref{eq:1} we end up with three ODE \eqref{eq:2a}, \eqref{eq:4}, \eqref{eq:5}, which are equivalent to formally infinite dimensional system \eqref{eq:1}. Furthermore, it can be shown that \eqref{eq:2a}, \eqref{eq:4}, \eqref{eq:5} are equivalent to
\begin{equation}\label{eq:6}
\begin{split}
  \frac{\D S}{\D t}(t) & =-h(S(t))P(t), \\
  \frac{\D P}{\D t}(t)  & =h(S(t-\tau))P(t-\tau)-\mu P(t),
\end{split}
\end{equation}
where function $h(S)$ is given by ($\textsf{M}^{-1}$ is the inverse function to $\textsf{M}$)
$$
h(S)=S_0\frac{\D}{\D\lambda}\textsf{M}^{-1}(0,\lambda)|_{\lambda=S(t)/S_0}\,.
$$
More details can be found in \cite{Novozhilov2008}. We remark that no special assumption was made so far about the initial susceptibility distribution.

Now assume that the initial distribution is a gamma distribution with parameters $\nu$ and $k$. We can find explicitly that for the initial gamma distribution system \eqref{eq:1} takes the form
\begin{equation}\label{eq:7}
\begin{split}
  \frac{\D S}{\D t}(t) & =-\frac{k}{\nu}\left(\frac{S(t)}{S_0}\right)^{1/k}S(t)P(t), \\
  \frac{\D P}{\D t}(t)  & =\frac{k}{\nu}\left(\frac{S(t-\tau)}{S_0}\right)^{1/k}S(t-\tau)P(t-\tau)-\mu P(t),
\end{split}
\end{equation}
which coincides with equations (3)-(4) in \cite{elderd2008host}, obtained by a different means.

In \cite{Dwyer2000} it was stated that ``Our results, however, can also be derived without this assumption'', where by ``assumption'' the initial gamma distributions is meant. In particular, what was used ``instead'' of the initial assumption of the gamma distribution is an ``alternative'' assumption on the conservation of the coefficient of variation.  Proposition \ref{pr:2:1} shows these two assumptions are equivalent and hence, not surprisingly, lead to the same system of equations. Therefore, instead of a general situation, a very particular mathematical model was analyzed.

\subsection{Toy example}
Here we give an example of a population with the \textit{initial} coefficient of variation $\textsf{CV}>1$, which exhibits stable oscillations.

For the following we will need the equation for the final epidemic size $x$
$$
1-x=\textsf{M}\left(0,-\frac{S_0x+P_0}{\mu}\right),
$$
which can be obtained directly from \eqref{eq:7}.

Consider a general family of distributions, defined by its moment generating function
$$
\textsf{M}(0,\lambda)=\exp\left[-\rho\left(1-\left(\frac{\alpha}{\alpha-\lambda}\right)^\nu\right)\right]
$$
with parameters $\alpha>0,\,\nu>-1,\nu\rho>0$. This is the co-called variance function distributions \cite{aalen2008survival}. We find, using \eqref{eq1:7}, that
$$
\textsf{E}_t[\omega]=\rho\left(\frac{\alpha}{\alpha-q(t)}\right)^\nu\frac{\nu}{\alpha-q(t)}\,,\quad \textsf{Var}_t[\omega]=\textsf{E}_t[\omega]\frac{\nu+1}{\alpha-q(t)}\,.
$$
Therefore, to guarantee that the coefficient of variation decreases with time, we should take $\nu<0,\,\rho<0$. For example, parameters $\rho = -3,\, \nu = -0.2,\alpha = 0.6$ imply that $\textsf{E}_0[\omega]=1,\,\textsf{CV}(0)=1.1547>1$, and both the mean and the coefficient of variation will decrease with time.

To produce population dynamics, we, following \cite{elderd2008host}, consider the system
\begin{align*}
N_{t+1}&=\lambda N_t(1-x_t),\\
Z_{t+1}&=\phi N_t x_t+\gamma Z_t,
\end{align*}
where $N_t$ and $Z_t$ are the initial host and pathogen densities in generation $t$, such that every time in the final epidemic size equation we take $S_0=N_t,\,P_0=Z_t$, $x_t$ is the final epidemic size for the year $t$, $\lambda,\phi,\gamma$ are the model specific parameters, which we take $\lambda=5.5,\phi=35,\gamma=0$ to coincide with the values used to produce Fig. 1 in \cite{elderd2008host}. 

The result of our simulations is given in Fig. \ref{fig:1}.
\begin{figure}[!ht]
\centering
\includegraphics[width=0.95\textwidth]{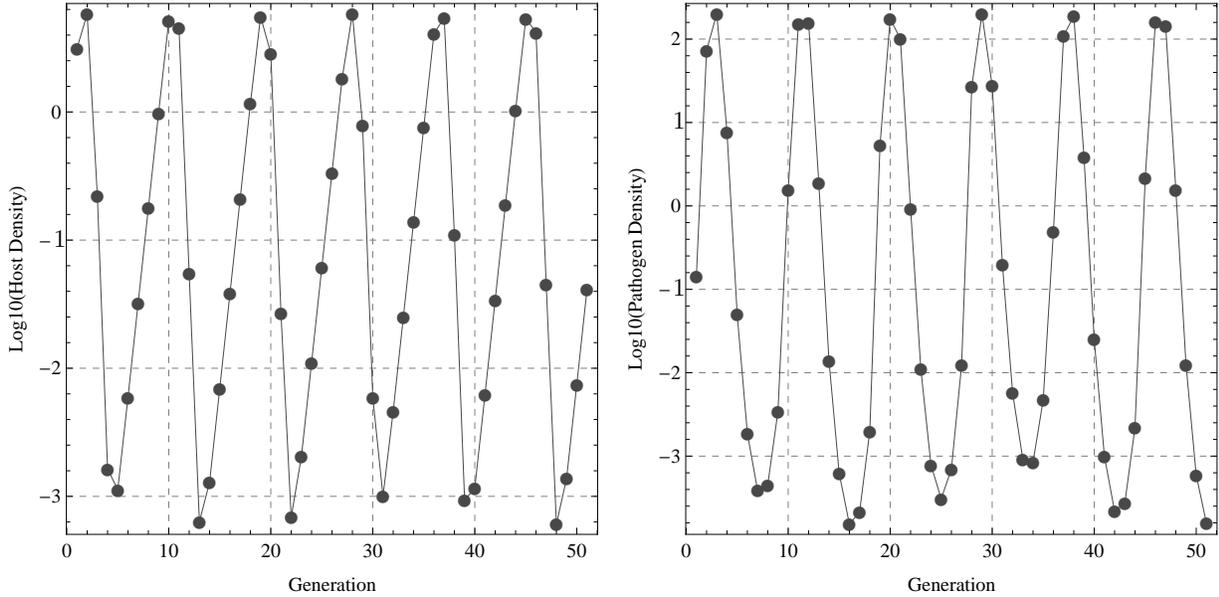}
\caption{Dynamics of the insect-pathogen models with the initial coefficient of variation exceeding 1. For the details and parameter values see the text.}\label{fig:1}
\end{figure}
We note that despite the fact that the initial coefficient of variation is above 1 (the population is highly heterogeneous), we still observe stable oscillations of large amplitude. This is in contrast with Fig. 1B in \cite{elderd2008host}, where a stable equilibrium for the model with $\textsf{CV}>1$ is shown. Therefore, in general, the conclusion from \cite{elderd2008host} that the model with the constant infection risk cannot produce the sustained oscillations if $\textsf{CV}>1$ should be replaced with the conclusion that the model with the constant infection risk \textit{and} the initial gamma distribution of susceptibility cannot produce the sustained oscillations if $\textsf{CV}>1$. Moreover, clear understanding of the implications of the key relation \eqref{eq1:7} allows us producing a mathematical model with the high initial coefficient of variation and stable oscillations.

\section{Concluding words}
In this text we discussed the evolution of distributions in special and yet very flexible mathematical models with parametric heterogeneity of the form \eqref{eq1:5} (and its possible direct generalizations, such as considered in Section \ref{sec:4} model with delay). We found that fixing the functional form of our models implies quite severe restrictions on the possible time-dependent evolution of the trait distributions under study. For instance a simple and quite reasonable for many situations assumption on the constant time independent variance turns out to be equivalent to the assumption on the initial normal distribution. Analogously, fixing the coefficient of variation is equivalent to fixing the initial (and for any future time moment) distribution as gamma distribution. The crucial tool in deriving such observations turned out to be the key relation \eqref{eq1:7}, which is also well known in the frailty theory \cite{aalen1994effects,aalen2008survival}. One of the important conclusions is that fixing the initial probability distribution can result in very different time dependent behavior of system's characteristics, as we illustrated in Section 3 by considering the evolution of the variance. Moreover, such an assumption, as we show in Section \ref{sec:4}, may also lead to important omissions in the model analysis and hence incorrect conclusions.

More importantly, however, is that a clear understanding of the mutual relation of the trait distributions and evolution of parameters (such as mean, variance, etc) allows devising efficient statistical procedures to test the data against our mathematical models. The statistical tests for the heterogeneous models of the form \eqref{eq1:5} only start to appear (one theoretical paper is \cite{tsachev2017set}). It is clear, however, that the analysis presented in this note implies that for a reliable conclusion the data on the trait distributions must be collected at several (two or more) time moments. Only in this case instead of testing usually quite approximate data against an exact hypothesis of parameter distribution (recall from Section 3 that we do need the exact distribution the determine the dynamics, no finite number of moments is enough), one can test whether, e.g., the variance (or coefficient of variation) does not change with time, and the analytical results collected in the present paper will make the conclusion on the exact form of the underlying distribution automatic.

\section{Appendix}
In Appendix we collect a few more examples of specific ODE that yield as their solutions various mgf. 

Let
$$
\mu_n(q)=\textsf{E}_q[(w-\textsf{E}_q[w])^n]
$$
be the $n$-th central moment of a distribution. Then for each prescribed $\mu_n(q)$ equality \eqref{eq1:7} yields a differential equation. For $n=2$ this equation is given by \eqref{eq2:1}. Consider the case $n=3$. Then, using the notation introduced in Section 2, we have
$$
\frac{m'''(q)}{m(q)}-3\frac{m''(q)}{m(q)}\frac{m'(q)}{m(q)}+3\left(\frac{m'(q)}{m(q)}\right)^3=\mu_3(q)\,.
$$
If we rewrite this equation in terms of the variable
$$
z(q)=\frac{m'(q)}{m(q)}\,,
$$
then it takes a very simple form
$$
z''(q)=\mu_3(q),
$$
which can be easily integrated for each particular $\mu_3(q)$. For instance, if $\mu_3(q)=C$ is a constant in internal time $q$, and using the natural initial conditions
$$
m(0)=1,\quad m'(0)=\mu,\quad m''(0)=\sigma^2+\mu^2,
$$
we obtain
$$
m(q)=\exp\left(\frac{Cq^3}{6}+\frac{\sigma^2 q^2}{2}+\mu q\right),
$$
which gives exactly the mgf of the normal distribution for $C=0$, as it is shown in a different way in Section 3.

For the fourth and fifth central moments the ODE in terms of $z$ takes the form, respectively,
\begin{align*}
z'''+3(z')^2&=\mu_4,\\
z^{(4)}+10 z'' z'&=\mu_5,
\end{align*}
whose general solutions can be written in term of the Weierstrass zeta functions.

\paragraph{Acknowledgements:} ASN would like to thank Bret Elderd for a profitable discussion while preparing Section \ref{sec:4} of this text. GPK's research was partially supported by the Intramural Research Program of the NCBI, NIH.


\end{document}